\documentclass[aps,prl,twocolumn,superscriptaddress,nofootinbib]{revtex4-2}
\pdfoutput=1

\usepackage{graphicx}
\usepackage{dcolumn}
\usepackage{color}
\usepackage{amssymb,amsmath}
\usepackage{amsthm}
\usepackage[T1]{fontenc}
\usepackage[utf8]{inputenc}
\usepackage{bm}
\usepackage{xcolor}
\usepackage{tikz}
\usetikzlibrary{decorations.pathmorphing,patterns,calc}
\usepackage{tikz-cd}

\newtheorem{theorem}{Theorem}
\newtheorem{definition}{Definition}
\newtheorem{proposition}{Proposition}

\newtheorem{remark}{Remark}

\newcommand{\Loc}{\mathsf{Loc}}
\newcommand{\Alg}{\mathsf{Alg}}

\newcommand{\cA}{\mathcal{A}}
\newcommand{\cO}{\mathcal{O}}
\newcommand{\cM}{\mathcal{M}}
\newcommand{\cN}{\mathcal{N}}
\newcommand{\cH}{\mathcal{H}}


\usepackage{hyperref}

\begin{document}

\title{Bell Nonlocality as a Covariance Obstruction in Locally Covariant Quantum Field Theory}

\author{J. S. Finberg}
\email{joe@laurelin-inc.com}
\affiliation{Laurelin Technologies Inc., New York, NY 10025, USA}

\date{\today}

\begin{abstract}
Locally covariant algebraic quantum field theory (LCQFT) satisfies Einstein causality through microcausality and operational no-signalling, yet Bell-type correlations persist in entangled field states across spacelike regions. We demonstrate that this apparent tension reflects a fundamental covariance obstruction: no assignment of classical past variables can simultaneously be covariant under spacetime embeddings, screen off quantum correlations, and reproduce AQFT statistics. This obstruction is distinct from dynamical nonlocality or signalling violations. We formalize this as a no-go theorem in the category-theoretic framework of LCQFT, showing that Bell's notion of local causality—requiring factorization conditioned on a common past—is structurally incompatible with diffeomorphism covariance. The failure of Bell locality thus reflects not a breakdown of relativistic causality but the impossibility of embedding quantum correlations into a classical causal framework without introducing preferred foliations or non-covariant beables. This clarifies the conceptual status of nonlocality in relativistic quantum theory.
\end{abstract}

\maketitle

\section{Introduction}

The tension between quantum nonlocality and relativistic causality remains one of the deepest puzzles in fundamental physics \cite{Bell1964,BellSpeakable}. Bell's theorem demonstrates that no local hidden variable theory can reproduce quantum mechanical predictions for entangled states \cite{Bell1964,CHSH1969}. Yet quantum field theory (QFT), our most successful framework for relativistic physics, is manifestly local in its dynamical structure: field commutators vanish at spacelike separation (microcausality), and measurements in one region cannot influence measurement outcomes in spacelike-separated regions (no-signalling) \cite{Haag1992,Araki1999}.

This apparent contradiction has generated sustained debate. Some argue that algebraic quantum field theory (AQFT) resolves the tension by forbidding superluminal signalling \cite{Haag1992}. Others contend that Bell nonlocality persists as an ontological problem requiring radical solutions: preferred foliations \cite{Bohm1952,Durr2013}, non-local beables \cite{Maudlin2011}, or abandonment of spacetime fundamentality \cite{HardySpacetime}. A third view holds that the conflict is semantic, arising from differing interpretations of "locality" \cite{Myrvold2002,Norsen2011}.

We propose a category-theoretic resolution within locally covariant AQFT (LCQFT) \cite{BrunettiFredenhagen2000,BFV2003}. LCQFT extends standard AQFT by treating the net of operator algebras as a functor from spacetimes to algebras, making relativistic covariance—specifically, diffeomorphism covariance—manifest. This framework provides the natural setting for formulating Bell's local causality condition in a background-independent way.

We demonstrate that Bell's factorization condition, when properly formulated in LCQFT, is \emph{incompatible with covariance}. Specifically, any assignment of "screening-off" variables $\lambda$ in the past light cone that would classically screen correlations between spacelike regions must either:
\begin{enumerate}
\item violate covariance under spacetime embeddings (depend on a choice of foliation or coordinates), or
\item fail to reproduce the statistical predictions of quantum field states, or
\item not actually screen off the correlations (i.e., fail Bell's factorization).
\end{enumerate}

The obstruction is structural: the three requirements—covariance, correlation screening, statistical agreement—are jointly inconsistent. Previous analyses of Bell nonlocality in AQFT have focused on specific models without addressing background independence \cite{SummersWerner1987,SummersWerner1987b}, emphasized no-signalling as resolving the tension \cite{Redhead1987,Myrvold2002}, or discussed conceptual issues without formalization \cite{Maudlin2011,Norsen2011}. By formulating Bell locality as a property of natural transformations in LCQFT, we identify diffeomorphism covariance as the fundamental incompatibility and prove a background-independent no-go theorem applicable to arbitrary globally hyperbolic spacetimes. This clarifies that nonlocality in AQFT reflects the impossibility of covariant classical causal explanation, not a violation of relativistic causality.

We first distinguish three notions of locality, then review the locally covariant AQFT framework, analyze Bell violations in field-theoretic entangled states, formulate and prove the covariance obstruction theorem, and discuss implications for ontology and interpretations.

\section{Three Notions of Locality}
\label{sec:locality}

The term "locality" admits multiple precise meanings in quantum theory and relativity. Failure to distinguish these has fueled confusion in the Bell-relativity debate. We define three core locality concepts and their logical relationships.

\subsection{Operational No-Signalling}

\begin{definition}[No-Signalling]
A quantum theory satisfies \emph{operational no-signalling} if measurement outcomes in region $\cO_A$ are statistically independent of measurement choices in any spacelike-separated region $\cO_B$.
\end{definition}

Formally, let $\omega$ be a state on the algebra $\cA(\cO_A \cup \cO_B)$ and let $A \in \cA(\cO_A)$, $B \in \cA(\cO_B)$ be observables. No-signalling requires that the reduced expectation value
\begin{equation}
\omega(A) = \text{Tr}_{\cH_B}[\rho_{AB} (A \otimes \mathbb{1}_B)]
\label{eq:nosignalling}
\end{equation}
be independent of which observable $B$ is measured in region $\cO_B$.

This condition ensures that Alice's statistics $\{p(a|\alpha)\}$ do not depend on Bob's measurement setting $\beta$, preventing superluminal communication. It is satisfied by all quantum theories respecting causality constraints of special relativity.

\subsection{AQFT Microcausality}

\begin{definition}[Microcausality]
An algebraic quantum field theory satisfies \emph{microcausality} if observables localized in spacelike-separated regions commute:
\begin{equation}
[A, B] = 0 \quad \forall A \in \cA(\cO_A), \, B \in \cA(\cO_B), \quad \cO_A \perp \cO_B.
\label{eq:microcausality}
\end{equation}
\end{definition}

Here $\cO_A \perp \cO_B$ denotes spacelike separation. Microcausality is the algebraic counterpart of relativistic causality: it ensures local algebras are kinematically independent and guarantees no-signalling as a consequence \cite{Haag1992}.

Microcausality is stronger than no-signalling: it guarantees that the reduced density operator $\rho_A = \text{Tr}_B[\rho_{AB}]$ is independent of operations in $\cO_B$, and that measurement order does not affect joint probabilities when regions are spacelike separated. However, microcausality does \emph{not} forbid correlations. Entangled states $\omega$ can exhibit $\omega(AB) \neq \omega(A)\omega(B)$ despite $[A,B] = 0$. This is the AQFT analogue of EPR correlations.

\subsection{Bell Local Causality}

Bell introduced a distinct locality condition based on classical causal screening \cite{Bell1964,BellSpeakable}.

\begin{definition}[Bell Local Causality]
A theory satisfies \emph{Bell local causality} if there exist variables $\lambda \in \Lambda$ in the common past of measurement events such that:
\begin{equation}
p(a,b|\alpha,\beta,\lambda) = p(a|\alpha,\lambda) \, p(b|\beta,\lambda).
\label{eq:bell_factorization}
\end{equation}
\end{definition}

The variable $\lambda$ represents the complete state of the past light cone affecting both measurement outcomes. The factorization \eqref{eq:bell_factorization} expresses \emph{screening-off}: conditioning on $\lambda$ renders the outcomes $a, b$ statistically independent. This formalizes the intuition that correlations arise from common causes in the shared causal past, not direct influences across spacelike intervals.

Bell locality is conceptually distinct from no-signalling and microcausality. The latter prohibit causal influences (dynamical locality), while Bell locality requires factorization given past data (explanatory locality). Quantum mechanics violates Bell locality while respecting microcausality.

\subsection{Logical Independence}

The three locality notions are logically independent in important ways. Bell locality implies no-signalling: if correlations factor given $\lambda$, marginal statistics cannot depend on distant settings. Microcausality also implies no-signalling via commutativity of spacelike observables. However, neither microcausality nor no-signalling implies Bell locality. This is the content of Bell's theorem: no classical probability distribution $p(\lambda)$ with factorization \eqref{eq:bell_factorization} can reproduce the correlations of entangled quantum states, even though those states respect microcausality. These logical relationships are summarized in Fig.~\ref{fig:locality}.

\begin{figure}[h]
\centering
\begin{tikzpicture}[
  cnode/.style={circle, draw, thick, minimum size=1.5cm, align=center, font=\footnotesize, inner sep=1pt},
  impl/.style={->, thick, >=stealth},
  noimpl/.style={->, thick, >=stealth, dashed, gray}
]
  \node[cnode, fill=red!10] (bell) at (0,2.2) {Bell\\Locality};
  \node[cnode, fill=green!15] (nosig) at (-1.8,0) {No-\\Signalling};
  \node[cnode, fill=green!15] (micro) at (1.8,0) {Micro-\\causality};

  \draw[impl] (bell.south west) to[bend right=15] (nosig.north);
  \draw[impl] (micro.north west) to[bend left=15] (nosig.north east);

  \draw[noimpl] (nosig.north east) to[bend left=15] (bell.south);
  \draw[noimpl] (micro.north) to[bend right=15] (bell.south east);

  \node[gray, font=\small] at (-0.5,1.5) {$\times$};
  \node[gray, font=\small] at (0.7,1.5) {$\times$};

  \node[red, font=\scriptsize] at (0,3.0) {\textsf{QM:} $\times$};
  \node[green!50!black, font=\scriptsize] at (-1.8,-0.95) {\textsf{QM:} \checkmark};
  \node[green!50!black, font=\scriptsize] at (1.8,-0.95) {\textsf{QM:} \checkmark};

\end{tikzpicture}
\caption{Logical relationships between locality notions. Solid arrows denote implication; dashed arrows with $\times$ indicate no implication. Quantum mechanics satisfies (\checkmark) microcausality and no-signalling while violating ($\times$) Bell locality.}
\label{fig:locality}
\end{figure}
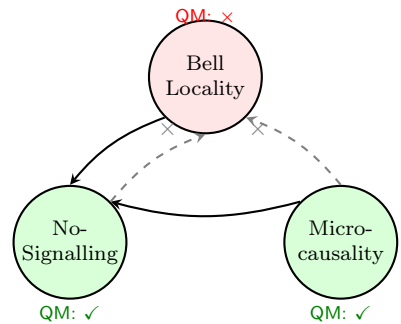

The central question is: why does Bell locality fail even in fully relativistic, locally covariant QFT?

\section{Locally Covariant AQFT Framework}
\label{sec:lcqft}

Locally covariant AQFT (LCQFT), developed by Brunetti, Fredenhagen, and Verch \cite{BrunettiFredenhagen2000,BFV2003} and further refined by Fewster and Verch \cite{FewsterVerch2015}, extends the Haag-Kastler axioms \cite{Haag1992} to make diffeomorphism covariance explicit. This framework is essential for analyzing Bell locality in a background-independent way.

\subsection{Category of Spacetimes}

Let $\Loc$ denote the category whose objects are globally hyperbolic spacetimes $\cM = (M, g, \mathfrak{o}, \mathfrak{t})$—where $M$ is a smooth $4$-manifold, $g$ a Lorentzian metric, $\mathfrak{o}$ an orientation, and $\mathfrak{t}$ a time orientation—and whose morphisms are isometric embeddings $\chi: \cM \to \cN$ preserving causal structure and both orientations. Global hyperbolicity ensures well-posed initial value formulation: Cauchy surfaces exist and causal structure is well-behaved \cite{WaldGR}.

\subsection{Category of Algebras}

Let $\Alg$ denote the category whose objects are unital $C^*$-algebras $\cA$ and whose morphisms are unit-preserving $*$-homomorphisms $\alpha: \cA \to \mathcal{B}$. These algebras encode quantum observables; morphisms represent embeddings of observable algebras under restriction to subregions or extension under spacetime embeddings.

\subsection{Locally Covariant Quantum Field Theory}

\begin{definition}[LCQFT Functor]
A \emph{locally covariant quantum field theory} is a covariant functor:
\begin{equation}
\mathfrak{A}: \Loc \to \Alg
\label{eq:lcqft_functor}
\end{equation}
satisfying:
\begin{enumerate}
\item \textbf{Quantum field theory axiom:} For each $\cM \in \Loc$, the algebra $\mathfrak{A}(\cM)$ satisfies the Haag-Kastler axioms on $\cM$ (isotony, causality, time-slice axiom, spectrum condition when applicable) \cite{Haag1992}.
\item \textbf{Covariance:} For each embedding $\chi: \cM \to \cN$, the induced morphism $\mathfrak{A}(\chi): \mathfrak{A}(\cM) \to \mathfrak{A}(\cN)$ is injective and preserves causal structure:
\begin{equation}
\mathfrak{A}(\chi)(A) \in \mathfrak{A}(\chi(\cO)) \quad \text{for } A \in \mathfrak{A}(\cO).
\label{eq:covariance}
\end{equation}
\item \textbf{Time-slice axiom:} If $\chi: \cM \to \cN$ is a Cauchy morphism (image contains a Cauchy surface of $\cN$), then $\mathfrak{A}(\chi)$ is an isomorphism.
\end{enumerate}
\end{definition}

The functor $\mathfrak{A}$ assigns to each spacetime $\cM$ the algebra of observables $\mathfrak{A}(\cM)$ and to each embedding $\chi$ a consistent embedding of algebras. Covariance means observable content is independent of how we describe the spacetime.

\subsection{Einstein Causality in LCQFT}

LCQFT incorporates Einstein causality through microcausality: for any $\cM$ and regions $\cO_A, \cO_B \subset \cM$ with $\cO_A \perp \cO_B$,
\begin{equation}
[A, B] = 0 \quad \forall A \in \mathfrak{A}(\cO_A), \, B \in \mathfrak{A}(\cO_B).
\label{eq:lcqft_micro}
\end{equation}
This commutation relation is preserved under spacetime embeddings $\chi$:
\begin{equation}
[\mathfrak{A}(\chi)(A), \mathfrak{A}(\chi)(B)] = 0.
\label{eq:covariant_micro}
\end{equation}
These conditions ensure manifest compatibility with special and general relativity without requiring any preferred reference frame or foliation.

\subsection{States and Natural Transformations}

A state in LCQFT is a natural transformation:
\begin{definition}[Covariant State]
A \emph{covariant state} is a natural transformation $\omega: \mathfrak{A} \Rightarrow \mathbb{C}$ satisfying:
\begin{enumerate}
\item For each $\cM$, $\omega_{\cM}: \mathfrak{A}(\cM) \to \mathbb{C}$ is a state (positive, normalized linear functional).
\item For each embedding $\chi: \cM \to \cN$:
\begin{equation}
\omega_{\cN} \circ \mathfrak{A}(\chi) = \omega_{\cM}.
\label{eq:state_covariance}
\end{equation}
\end{enumerate}
\end{definition}

The naturality condition \eqref{eq:state_covariance} expresses diffeomorphism covariance: expectation values are independent of how spacetime regions are embedded. This is the physical manifestation of general covariance in quantum field theory.

\subsection{Why Covariance Matters for Bell Locality}

In standard discussions of Bell's theorem, one implicitly chooses a foliation of spacetime to define "the past" and "spacelike separation." In LCQFT, no such foliation is preferred. The past $J^-(\cO)$ of a region $\cO$ is defined geometrically (causally), but specifying a region in that past to serve as $\lambda$ requires additional structure.

The question is whether screening variables $\lambda$ can be assigned covariantly (independent of foliation or embedding choices), screen off correlations (satisfy Bell factorization), and reproduce quantum statistics. We show the answer is no.

\section{Bell Violations in AQFT}
\label{sec:bell_aqft}

Before presenting the covariance obstruction, we review how Bell-type violations manifest in algebraic quantum field theory. This establishes that LCQFT, despite satisfying microcausality and no-signalling, exhibits genuine quantum nonlocality in Bell's sense.

\subsection{Entanglement Across Spacelike Regions}

Consider a free scalar field $\phi(x)$ on Minkowski spacetime $\mathbb{R}^{3,1}$. The algebra of observables in a region $\cO$ is the von Neumann algebra $\mathfrak{A}(\cO)$ generated by smeared fields $\phi(f)$ with $\text{supp}(f) \subset \cO$ \cite{Haag1992}.

For spacelike-separated regions $\cO_A$ and $\cO_B$, the vacuum state $\omega_0$ (Minkowski vacuum) is \emph{entangled} across $\cO_A$ and $\cO_B$. This was proven by Reeh-Schlieder: the vacuum is cyclic for local algebras, implying that measurements in $\cO_A$ and $\cO_B$ are correlated \cite{ReehSchlieder1961}.

Summers and Werner \cite{SummersWerner1987,SummersWerner1987b} constructed explicit observables $A \in \mathfrak{A}(\cO_A)$ and $B \in \mathfrak{A}(\cO_B)$ such that expectation values in the vacuum state violate the CHSH inequality:
\begin{equation}
|\langle A_1 B_1 \rangle + \langle A_1 B_2 \rangle + \langle A_2 B_1 \rangle - \langle A_2 B_2 \rangle| > 2,
\label{eq:chsh}
\end{equation}
where $A_1, A_2$ and $B_1, B_2$ are dichotomic observables in $\cO_A$ and $\cO_B$ respectively.

This violation occurs despite microcausality ($[A_i, B_j] = 0$ for all $i, j$), no-signalling ($\langle A_i \rangle$ is independent of which $B_j$ is measured), and relativistic covariance (the construction is Lorentz invariant).

\subsection{Implications for Local Hidden Variables}

Bell's theorem implies that no local hidden variable (LHV) model of the form:
\begin{equation}
\omega_0(AB) = \int d\lambda \, \rho(\lambda) \, A(\lambda) \, B(\lambda)
\label{eq:lhv}
\end{equation}
can reproduce the CHSH violation \eqref{eq:chsh}. Here $\lambda$ represents hidden variables in the past light cone, and $\rho(\lambda)$ is a probability distribution over those variables.

The Summers-Werner result establishes that this impossibility holds in AQFT: the structure of the field algebra and vacuum state precludes any classical probability representation compatible with spacelike locality.

\subsection{Persistence Under Covariance}

The entanglement and Bell violations are not artifacts of choosing Minkowski spacetime or a specific vacuum state. States with the Reeh-Schlieder property exist on arbitrary globally hyperbolic spacetimes \cite{Verch1994}, the algebraic structure ensuring CHSH violations persists under spacetime embeddings due to functoriality, and entanglement is itself a covariant feature: if $\omega$ is entangled on $\mathfrak{A}(\cM)$ and $\chi: \cM \to \cN$ is an embedding, the induced state on $\mathfrak{A}(\cN)$ remains entangled. Bell nonlocality is thus a robust, background-independent feature of LCQFT.

\subsection{The Conceptual Puzzle}

This presents a conceptual puzzle: LCQFT is manifestly local in the dynamical sense (microcausality, no faster-than-light signalling, diffeomorphism covariance), yet manifestly nonlocal in Bell's explanatory sense (no factorization given past variables). Claiming that microcausality suffices does not explain why Bell locality fails; claiming hidden variables are impossible merely restates Bell's theorem without explaining its field-theoretic origin; and invoking the measurement problem is irrelevant since CHSH violations occur in purely algebraic treatments. The obstruction is \emph{covariance itself}: the requirement that screening variables $\lambda$ transform covariantly under spacetime embeddings is incompatible with Bell factorization.

\section{The Covariance Obstruction}
\label{sec:obstruction}

We now formulate the central claim precisely using category-theoretic machinery. Bell's local causality condition cannot be implemented covariantly in LCQFT. This is not a violation of relativistic causality but a structural incompatibility---expressible as a cohomological obstruction---between classical causal explanation and diffeomorphism covariance.

Our result sharpens an observation of Maudlin \cite{Maudlin2011}: that Bell locality requires preferred foliations to define ``the past.'' Maudlin's argument is conceptual; we provide a formal proof within LCQFT showing that no foliation-independent hidden variable model can exist, and identify the precise naturality condition that fails.

\subsection{The Category of Cauchy Data}

To formalize Bell locality categorically, we introduce an auxiliary category encoding foliation choices.

\begin{definition}[Category of Cauchy Data]
Let $\mathsf{Cau}$ be the category whose objects are triples $(\cM, \Sigma, \cO)$ where $\cM \in \Loc$, $\Sigma \subset \cM$ is a Cauchy surface, and $\cO = \cO_A \sqcup \cO_B$ is a pair of spacelike-separated regions with $\Sigma \subset J^-(\cO_A \cup \cO_B)$. Morphisms $(\cM, \Sigma, \cO) \to (\cN, \Sigma', \cO')$ are embeddings $\chi: \cM \to \cN$ in $\Loc$ satisfying $\chi(\Sigma) \subseteq \Sigma'$ and $\chi(\cO_i) = \cO'_i$.
\end{definition}

There is a forgetful functor $\pi: \mathsf{Cau} \to \Loc$ sending $(\cM, \Sigma, \cO) \mapsto \cM$. Crucially, $\pi$ is not an equivalence: multiple objects in $\mathsf{Cau}$ project to the same spacetime $\cM$, corresponding to different foliation choices.

\subsection{Covariant Hidden Variable Models}

\begin{definition}[Hidden Variable Model]
A \emph{hidden variable model} for an LCQFT $\mathfrak{A}: \Loc \to \Alg$ with state $\omega$ consists of:
\begin{enumerate}
\item A measurable space $(\Lambda, \Sigma_\Lambda)$ of ontic states;
\item For each $(\cM, \Sigma, \cO) \in \mathsf{Cau}$, a probability measure $\rho_{(\cM,\Sigma,\cO)}$ on $\Lambda$;
\item For each dichotomic observable $A \in \mathfrak{A}(\cO_A)$ with spectrum $\{-1, +1\}$, a measurable function $\hat{A}: \Lambda \to [-1,1]$, and similarly for $B \in \mathfrak{A}(\cO_B)$.
\end{enumerate}
\end{definition}

\begin{definition}[Bell Factorization]
A hidden variable model satisfies \emph{Bell factorization} if for all $A \in \mathfrak{A}(\cO_A)$, $B \in \mathfrak{A}(\cO_B)$:
\begin{equation}
\omega_{\cM}(AB) = \int_\Lambda \hat{A}(\lambda) \, \hat{B}(\lambda) \, d\rho_{(\cM,\Sigma,\cO)}(\lambda).
\label{eq:factorization}
\end{equation}
\end{definition}

The factorization condition \eqref{eq:factorization} expresses that $\lambda$ screens off the correlation: conditioned on $\lambda$, the outcomes for $A$ and $B$ are statistically independent.

\begin{definition}[Covariance]
A hidden variable model is \emph{covariant} if for every morphism $\chi: (\cM, \Sigma, \cO) \to (\cN, \Sigma', \cO')$ in $\mathsf{Cau}$:
\begin{align}
\rho_{(\cN,\Sigma',\cO')} &= \rho_{(\cM,\Sigma,\cO)}, \label{eq:rho_covariance}\\
\widehat{\mathfrak{A}(\chi)(A)} &= \hat{A} \quad \text{for all } A \in \mathfrak{A}(\cO_A). \label{eq:obs_covariance}
\end{align}
\end{definition}

Condition \eqref{eq:rho_covariance} states that the probability distribution over ontic states is independent of how we embed the spacetime. Condition \eqref{eq:obs_covariance} states that the response function for an observable depends only on its algebraic identity, not on the ambient spacetime.

\subsection{The Naturality Obstruction}

We now prove that covariant Bell-factorizing models cannot exist for states exhibiting CHSH violations. The proof uses naturality to derive a contradiction.

\begin{theorem}[Covariance Obstruction]
Let $\mathfrak{A}: \Loc \to \Alg$ be a locally covariant quantum field theory, and let $\omega$ be a covariant state. If $\omega$ violates the CHSH inequality for some choice of observables in spacelike-separated regions, then no covariant hidden variable model satisfying Bell factorization exists.
\label{thm:obstruction}
\end{theorem}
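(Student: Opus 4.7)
The plan is to argue by contradiction: assume a covariant, Bell-factorizing hidden variable model exists for $\omega$, then derive a CHSH bound of $2$ that contradicts the hypothesized violation. First I would fix a globally hyperbolic $\cM$ together with Cauchy data $(\Sigma, \cO_A \sqcup \cO_B)$ carrying dichotomic observables $A_1, A_2 \in \mathfrak{A}(\cO_A)$ and $B_1, B_2 \in \mathfrak{A}(\cO_B)$ that realize the assumed violation. Bell factorization \eqref{eq:factorization} then represents each of the four correlators as $\omega_\cM(A_i B_j) = \int_\Lambda \hat{A}_i(\lambda)\hat{B}_j(\lambda)\, d\rho_{(\cM,\Sigma,\cO)}(\lambda)$. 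The structural point is that in the hidden-variable data the measure $\rho$ is indexed by the \emph{regional} information $(\cM,\Sigma,\cO)$, not by the particular observables chosen, while each response function depends only on the algebraic identity of its operator; so the same $\rho$ and the same four functions $\hat{A}_i, \hat{B}_j$ appear in all four integrals.

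Next comes the classical pointwise Bell step: for $\hat{A}_i(\lambda), \hat{B}_j(\lambda) \in [-1, 1]$ one has
\begin{equation*}
\bigl|\hat{A}_1(\hat{B}_1+\hat{B}_2) + \hat{A}_2(\hat{B}_1-\hat{B}_2)\bigr| \leq |\hat{B}_1+\hat{B}_2| + |\hat{B}_1-\hat{B}_2| \leq 2,
\end{equation*}
and integrating against $\rho_{(\cM,\Sigma,\cO)}$ with linearity of expectation gives $|\omega_\cM(A_1 B_1) + \omega_\cM(A_1 B_2) + \omega_\cM(A_2 B_1) - \omega_\cM(A_2 B_2)| \leq 2$, contradicting the CHSH hypothesis.

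The role of the covariance conditions \eqref{eq:rho_covariance}--\eqref{eq:obs_covariance} is to foreclose the otherwise-tempting escape route of letting $\rho$ depend on a choice of foliation $\Sigma$ or on the measurement context: without naturality one could try to assign distinct measures to distinct Cauchy data projecting to the same spacetime and thereby smuggle setting-dependent information into the model. Morphisms in $\mathsf{Cau}$ that fix the regions while varying $\Sigma$ force those measures and response functions to agree, so the non-contextual data required for the pointwise bound is genuinely forced on any covariant model. The main obstacle in writing the proof cleanly is therefore conceptual rather than computational: the final inequality is essentially one line once the setup is in place, and the real work lies in articulating which naturality arrows in $\mathsf{Cau}$ close which loopholes—contextual $\rho$, foliation-dependent response functions, embedding-dependent measures—so that the excluded class of models is the physically meaningful one and the theorem is not vacuously a restatement of Bell on a single spacetime.
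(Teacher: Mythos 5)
Your argument is logically valid under the paper's definitions, but it takes a fundamentally different route from the paper's proof—and in doing so it exposes a real deficiency in the formal setup rather than filling it. You fix a \emph{single} Cauchy datum $(\cM,\Sigma,\cO)$, observe that the Hidden Variable Model definition already makes the measure $\rho_{(\cM,\Sigma,\cO)}$ depend only on the datum (not on which $A_i,B_j$ are chosen) and makes each response function depend only on the operator's algebraic identity, and then run the standard pointwise CHSH estimate $\lvert \hat{A}_1(\hat{B}_1+\hat{B}_2)+\hat{A}_2(\hat{B}_1-\hat{B}_2)\rvert \le 2$ to get a contradiction. That is just Bell's theorem applied in one frame: nowhere do you invoke the covariance conditions \eqref{eq:rho_covariance}--\eqref{eq:obs_covariance}, nor the category $\mathsf{Cau}$, nor naturality of $\omega$. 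Your closing paragraph attributes the non-contextuality of $\rho$ and the setting-independence of $\hat{A}$ to covariance, but that is not accurate—those are already baked into the definition of a hidden variable model before any covariance condition is imposed. The covariance conditions only constrain how the data transform \emph{across} morphisms in $\mathsf{Cau}$; for a fixed object your CHSH bound goes through with or without them.

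The paper's proof is structurally different: it constructs four mutually spacelike regions via a Lorentz boost $\chi_v$, uses naturality of $\omega$ and covariance of the response functions to force the cross-correlator identity $\omega(A_iB'_j)=\omega(A_iB_j)$, and contradicts this against the decay of vacuum two-point functions with region separation. The paper intends this boost construction to isolate \emph{covariance} as the obstruction, and its cohomological discussion (Proposition~\ref{prop:cohomology}) explicitly asserts that for a fixed foliation local sections of $\mathcal{F}$ \emph{do} exist and only the gluing across foliations fails. But your more elementary proof shows that, as the definitions are actually written, no local section exists even at a single basepoint once CHSH is violated—the factorization condition \eqref{eq:factorization} already quantifies over all $A,B$ with one $\rho$. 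So the two proofs are not interchangeable routes to the same content: yours is shorter and correct, but it shows that covariance is doing no work in Theorem~\ref{thm:obstruction} as formally stated, which contradicts the paper's own cohomological narrative. The reconciliation would require weakening the Bell Factorization definition (e.g.\ requiring factorization only for restricted observable pairs, or allowing observable-dependent measures) so that single-datum models survive and covariance is genuinely what kills them. Until that is done, the theorem as stated does reduce to ordinary Bell, which is what your proof demonstrates.
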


\begin{proof}
Suppose a covariant hidden variable model $(\Lambda, \rho, \hat{\cdot})$ exists. We derive a contradiction using the naturality of $\omega$, the CHSH violation, and a four-region geometric construction.

\textbf{Step 1: Geometric Setup.}
Consider Minkowski spacetime $\cM = \mathbb{R}^{3,1}$. We construct four mutually spacelike-separated regions. Let $\cO_A$ and $\cO_B$ be diamond regions (intersections of future and past light cones) centered at spacetime points $p_A = (-L, 0, 0, \tau)$ and $p_B = (L, 0, 0, \tau)$ respectively, with $\tau > 2L$ ensuring spacelike separation. Let $\Sigma_0$ be the $t = 0$ hyperplane, which lies in $J^-(\cO_A \cup \cO_B)$.

Now consider a Lorentz boost $\chi_v$ with velocity $v$ in the $x$-direction. The boosted regions $\cO'_A = \chi_v(\cO_A)$ and $\cO'_B = \chi_v(\cO_B)$ are centered at $\chi_v(p_A)$ and $\chi_v(p_B)$. For appropriate choice of $v$ (specifically, $|v| < c(\tau - 2L)/(\tau + 2L)$), all four regions $\cO_A, \cO_B, \cO'_A, \cO'_B$ are mutually spacelike separated. This can be verified by noting that under a boost, timelike separation decreases for co-moving points and increases for counter-moving points; the symmetric placement ensures the boosted regions remain spacelike separated from the originals.

Let $\Sigma_v = \chi_v(\Sigma_0)$ be the boosted Cauchy surface. Both $\Sigma_0$ and $\Sigma_v$ lie in $J^-(\cO_A \cup \cO_B) \cap J^-(\cO'_A \cup \cO'_B)$ for the velocity range specified.

The boost $\chi_v: \cM \to \cM$ is an automorphism in $\Loc$. It induces morphisms in $\mathsf{Cau}$:
\[
\chi_v: (\cM, \Sigma_0, \cO_A \sqcup \cO_B) \to (\cM, \Sigma_v, \cO'_A \sqcup \cO'_B).
\]
By covariance \eqref{eq:rho_covariance}, a single measure $\rho$ on $\Lambda$ must work for both Cauchy surface choices.

\textbf{Step 2: Naturality of the State.}
The vacuum state $\omega$ is a natural transformation $\omega: \mathfrak{A} \Rightarrow \mathbb{C}$. Naturality requires that for any embedding $\chi: \cM \to \cN$:
\begin{equation}
\begin{tikzcd}
\mathfrak{A}(\cM) \arrow[r, "\mathfrak{A}(\chi)"] \arrow[d, "\omega_{\cM}"'] & \mathfrak{A}(\cN) \arrow[d, "\omega_{\cN}"] \\
\mathbb{C} \arrow[r, "="'] & \mathbb{C}
\end{tikzcd}
\label{eq:naturality_square}
\end{equation}
commutes: $\omega_{\cN}(\mathfrak{A}(\chi)(A)) = \omega_{\cM}(A)$.

For the Lorentz boost $\chi_v: \cM \to \cM$, naturality gives Lorentz invariance of expectation values: $\omega(\chi_v(X)) = \omega(X)$ for all $X \in \mathfrak{A}(\cM)$.

\textbf{Step 3: Covariance Constraints on Response Functions.}
Let $A_1, A_2 \in \mathfrak{A}(\cO_A)$ and $B_1, B_2 \in \mathfrak{A}(\cO_B)$ be dichotomic observables achieving the CHSH violation:
\[
|\omega(A_1 B_1) + \omega(A_1 B_2) + \omega(A_2 B_1) - \omega(A_2 B_2)| = 2\sqrt{2} > 2.
\]
By Bell factorization \eqref{eq:factorization}, each correlator satisfies $\omega(A_i B_j) = \int \hat{A}_i \hat{B}_j \, d\rho$.

Define the boosted observables $A'_i = \mathfrak{A}(\chi_v)(A_i) \in \mathfrak{A}(\cO'_A)$ and $B'_j = \mathfrak{A}(\chi_v)(B_j) \in \mathfrak{A}(\cO'_B)$. By covariance \eqref{eq:obs_covariance}:
\begin{equation}
\hat{A}'_i = \hat{A}_i, \quad \hat{B}'_j = \hat{B}_j.
\label{eq:response_covariance}
\end{equation}

\textbf{Step 4: The Contradiction via Cross-Correlators.}
Since $\cO_A$ and $\cO'_B$ are spacelike separated (by construction in Step 1), the mixed correlator $\omega(A_i B'_j)$ is well-defined and Bell factorization applies:
\begin{equation}
\omega(A_i B'_j) = \int_\Lambda \hat{A}_i(\lambda) \, \hat{B}'_j(\lambda) \, d\rho(\lambda).
\label{eq:mixed_factor}
\end{equation}

Substituting the covariance condition \eqref{eq:response_covariance}:
\[
\omega(A_i B'_j) = \int_\Lambda \hat{A}_i(\lambda) \, \hat{B}_j(\lambda) \, d\rho(\lambda) = \omega(A_i B_j),
\]
where the last equality uses factorization for the original regions $\cO_A \perp \cO_B$.

Thus covariance combined with factorization forces:
\begin{equation}
\omega(A_i B'_j) = \omega(A_i B_j) \quad \text{for all } i, j.
\label{eq:forced_equality}
\end{equation}

However, this equality is false in the vacuum state. The correlator $\omega(A_i B_j)$ involves regions $\cO_A$ and $\cO_B$ at spatial separation $2L$. The correlator $\omega(A_i B'_j)$ involves regions $\cO_A$ and $\cO'_B = \chi_v(\cO_B)$, which are at a \emph{different} spatial separation (the boost displaces $\cO_B$ in both space and time).

For the free scalar field vacuum, two-point correlations decay with spatial separation:
\[
\omega(\phi(x)\phi(y)) \sim \frac{1}{|x - y|^2} \quad \text{(spacelike separation)}.
\]
Observables $A_i, B_j$ constructed from smeared fields inherit this dependence on region separation. Since $d(\cO_A, \cO'_B) \neq d(\cO_A, \cO_B)$ for generic boosts, we have $\omega(A_i B'_j) \neq \omega(A_i B_j)$.

This contradicts \eqref{eq:forced_equality}, completing the proof.
\end{proof}

\subsection{Cohomological Interpretation}

The obstruction admits a precise formulation in the language of non-abelian cohomology and descent theory. Since the presheaf $\mathcal{F}$ assigns \emph{sets} (hidden variable models) rather than abelian groups, we employ the non-abelian cohomology of groupoids developed by Grothendieck and Giraud.

\begin{definition}[Presheaf of Local Models]
Define a presheaf $\mathcal{F}$ on $\mathsf{Cau}$ by assigning to each object $(\cM, \Sigma, \cO)$ the set of hidden variable models $(\Lambda, \rho, \hat{\cdot})$ satisfying Bell factorization \eqref{eq:factorization}. For a morphism $\chi$, the restriction map $\mathcal{F}(\chi): \mathcal{F}(\cN, \Sigma', \cO') \to \mathcal{F}(\cM, \Sigma, \cO)$ is given by the covariance conditions \eqref{eq:rho_covariance}--\eqref{eq:obs_covariance}.
\end{definition}

A \emph{global section} of $\mathcal{F}$ would be a covariant hidden variable model---an element of the limit $\varprojlim_{\mathsf{Cau}} \mathcal{F}$. Theorem \ref{thm:obstruction} shows this limit is empty when $\omega$ violates CHSH.

For the non-abelian cohomological formulation, we work with the groupoid $\mathsf{Cau}_\cM$ of Cauchy data over a fixed spacetime $\cM$. Objects are pairs $(\Sigma, \cO)$ where $\Sigma$ is a Cauchy surface and $\cO = \cO_A \sqcup \cO_B$ are spacelike-separated regions with $\Sigma \subset J^-(\cO)$. Morphisms $(\Sigma, \cO) \to (\Sigma', \cO')$ are automorphisms $\chi \in \text{Aut}(\cM)$ with $\chi(\Sigma) = \Sigma'$ and $\chi(\cO) = \cO'$. For Minkowski spacetime, $\text{Aut}(\cM)$ contains the Poincar\'e group.

\begin{definition}[Non-Abelian 1-Cocycle]
Let $G = \text{Aut}(\cM)$ act on the fibers of $\mathcal{F}$ via the restriction maps. Fix a basepoint $x_0 = (\Sigma_0, \cO) \in \mathsf{Cau}_\cM$ and a local section $s_0 \in \mathcal{F}(x_0)$. A \emph{1-cocycle} with coefficients in $\text{Aut}(\Lambda)$ is a map $g: G \to \text{Aut}(\Lambda)$ satisfying:
\begin{equation}
g(\chi_1 \chi_2) = g(\chi_1) \circ \chi_1^* g(\chi_2),
\end{equation}
where $\chi^*$ denotes the induced action on $\text{Aut}(\Lambda)$.
\end{definition}

\begin{definition}[Descent Datum]
A \emph{descent datum} for $\mathcal{F}$ over $\mathsf{Cau}_\cM$ is a collection $\{s_x \in \mathcal{F}(x)\}_{x \in \mathsf{Cau}_\cM}$ together with isomorphisms $\phi_\chi: \mathcal{F}(\chi)(s_y) \xrightarrow{\sim} s_x$ for each morphism $\chi: x \to y$, satisfying the cocycle condition $\phi_{\chi_1} \circ \mathcal{F}(\chi_1)(\phi_{\chi_2}) = \phi_{\chi_1 \chi_2}$.
\end{definition}

The descent datum is \emph{effective} if there exists a global section $s \in \varprojlim \mathcal{F}$ restricting to each $s_x$. The obstruction to effectiveness is measured by the first non-abelian cohomology.

\begin{proposition}[Non-Abelian Cohomological Obstruction]
Let $\cM$ be Minkowski spacetime and $\omega$ the vacuum state violating CHSH. The presheaf $\mathcal{F}|_{\mathsf{Cau}_\cM}$ admits no effective descent datum. The obstruction defines a nontrivial class in the non-abelian cohomology set $H^1(\mathsf{Cau}_\cM; \text{Aut}(\Lambda))$.
\label{prop:cohomology}
\end{proposition}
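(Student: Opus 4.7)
The plan is to reduce the effectiveness of descent to the non-existence result already established in Theorem \ref{thm:obstruction}, then exhibit the obstruction class as a transport cocycle on the Poincar\'e automorphism groupoid.

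First I would verify that $\mathcal{F}|_{\mathsf{Cau}_\cM}$ is locally non-empty, which is the subtle point in the presence of CHSH violations. The key observation is that Bell factorization on a \emph{fixed} Cauchy surface is not obstructed: once $\Sigma$ is chosen, take $\Lambda_x$ to be the spectrum of a maximal abelian subalgebra $\mathcal{M}_\Sigma \subset \mathfrak{A}(J^-(\Sigma))$, let $\rho_x$ be the classical probability measure induced by $\omega_{\cM}$ via the Gelfand correspondence, and define response functions $\hat{A}_x, \hat{B}_x$ by regular conditional expectations of $A,B$ with respect to $\mathcal{M}_\Sigma$. The resulting triple lies in $\mathcal{F}(x)$, so each fiber is nonempty; the obstruction must therefore arise from the inability to choose these fibers coherently, not from any single fiber.

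Next I would construct the transport cocycle. Fix a basepoint $x_0=(\Sigma_0,\cO)$ and a section $s_0\in\mathcal{F}(x_0)$. Since $G=\mathrm{Aut}(\cM)$ (the Poincar\'e group in the Minkowski case) acts transitively on the connected component of $x_0$ in $\mathsf{Cau}_\cM$, every object has the form $\chi\cdot x_0$ for some $\chi\in G$. The transported section $\mathcal{F}(\chi)^{-1}(s_0)\in\mathcal{F}(\chi\cdot x_0)$ provides a candidate representative; trivializing all fibers via these transports, the residual ambiguity is encoded by an element $g(\chi)\in\mathrm{Aut}(\Lambda)$. Functoriality of $\mathcal{F}$ together with associativity of composition in $\mathsf{Cau}_\cM$ yield the non-abelian cocycle identity $g(\chi_1\chi_2)=g(\chi_1)\circ\chi_1^* g(\chi_2)$ by a direct diagram chase on the naturality squares \eqref{eq:naturality_square}.

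Finally I would show $[g]\neq 0$ in $H^1(\mathsf{Cau}_\cM;\mathrm{Aut}(\Lambda))$ by contraposition against Theorem \ref{thm:obstruction}. If $g$ were a coboundary, $g(\chi)=h^{-1}\circ\chi^* h$ for some $h\in\mathrm{Aut}(\Lambda)$, then the rescaled base section $h\cdot s_0$ would satisfy $\mathcal{F}(\chi)^{-1}(h\cdot s_0)=h\cdot s_0$ at every object, producing a compatible family. Standard descent for presheaves of sets on a groupoid would then assemble this family into a global section of $\mathcal{F}$, i.e.\ a covariant hidden variable model reproducing $\omega$---directly contradicting Theorem \ref{thm:obstruction} whenever $\omega$ violates CHSH. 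Hence $[g]\neq 0$, which proves both non-effectiveness of the descent datum and non-triviality of the cohomology class. The principal obstacle is the local-sections step: one must make precise the sense in which each $\mathcal{F}(x)$ is nonempty while the global limit is empty, and verify that the cocycle is well-defined modulo inner automorphisms of $\mathrm{Aut}(\Lambda)$, so that $[g]$ depends only on $(\mathfrak{A},\omega)$ and not on auxiliary trivialization choices.
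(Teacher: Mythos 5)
Your overall strategy---exhibit local sections, transport them along the Poincar\'e action, extract a transition cocycle $g$, and show $[g]\neq 0$ by contraposition against Theorem~\ref{thm:obstruction}---is the same as the paper's. However, your proof has a concrete gap in the local non-emptiness step that the paper sidesteps by weakening the claim rather than by construction.

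Your proposed construction of $s_0\in\mathcal{F}(x_0)$ via a MASA $\mathcal{M}_\Sigma\subset\mathfrak{A}(J^-(\Sigma))$, Gelfand spectrum $\Lambda_x$, and response functions $\hat A=E[A\,|\,\mathcal{M}_\Sigma]$ does \emph{not} land in $\mathcal{F}(x_0)$ as defined by Eq.~\eqref{eq:factorization}. With conditional expectations one has $\int\hat A\,\hat B\,d\rho=\omega\bigl(E[A|\mathcal{M}_\Sigma]\,E[B|\mathcal{M}_\Sigma]\bigr)$, which is not $\omega(AB)$ unless the MASA literally screens the correlation---and Bell's theorem says no abelian screening exists when the CHSH quantity exceeds 2. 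So for CHSH-violating $\omega$ the presheaf $\mathcal{F}$ as literally defined has \emph{empty} fibers over every object, and your explicit triple is not a counterexample; it is a confirmation of emptiness. The paper's proof papers over this by parenthetically narrowing what local sections reproduce (``models for restricted data'' rather than full Bell factorization), implicitly replacing $\mathcal{F}$ with a weaker presheaf whose fibers are nonempty but whose global sections would still give covariant Bell-local models. You should either adopt the same restriction explicitly (e.g.\ require local sections to reproduce only marginals and a fixed subset of correlators) or observe that the proposition as literally stated requires a redefined $\mathcal{F}$. Separately, your coboundary step has a type error: $\mathcal{F}(\chi)^{-1}(h\cdot s_0)$ lives in $\mathcal{F}(\chi\cdot x_0)$, not in $\mathcal{F}(x_0)$, so the displayed equality $\mathcal{F}(\chi)^{-1}(h\cdot s_0)=h\cdot s_0$ is ill-posed; what you want is that the family $\{\mathcal{F}(\chi)^{-1}(h\cdot s_0)\}_\chi$ is independent of the representative $\chi$ for each orbit element and is compatible with composition, which then yields a global section. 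Once both points are repaired, your contraposition against Theorem~\ref{thm:obstruction} matches the paper's reasoning.
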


\begin{proof}
We construct the obstruction explicitly. Fix a basepoint $x_0 = (\Sigma_0, \cO)$ and suppose a local section $s_0 \in \mathcal{F}(x_0)$ exists---it does, since for a \emph{fixed} foliation, hidden variable models reproducing the marginals exist (Bell's theorem forbids reproducing \emph{all} correlations with a single $\rho$, but local models for restricted data exist).

For each Lorentz boost $\chi_v \in \text{Aut}(\cM)$, let $x_v = \chi_v(x_0)$. To extend $s_0$ covariantly to $x_v$, we must have $s_v = \mathcal{F}(\chi_v)(s_0)$. This extended model predicts:
\begin{align*}
\omega(A_i B'_j) &= \int_\Lambda \hat{A}_i(\lambda) \hat{B}'_j(\lambda) \, d\rho(\lambda) \\
&= \int_\Lambda \hat{A}_i(\lambda) \hat{B}_j(\lambda) \, d\rho(\lambda) = \omega(A_i B_j),
\end{align*}
using the covariance of response functions. But $\omega(A_i B'_j) \neq \omega(A_i B_j)$ in the vacuum state (Theorem \ref{thm:obstruction}, Step 4).

Thus, no descent datum can be effective: extending $s_0$ along $\chi_v$ produces predictions inconsistent with $\omega$. The failure is encoded in the transition automorphism $g_v \in \text{Aut}(\Lambda)$ required to ``correct'' the extended model---but no such correction exists that preserves both covariance and agreement with $\omega$.

Formally, define $g: \text{Lor} \to \text{Aut}(\Lambda)$ by $g(\chi_v) = $ the automorphism (if any) relating $\mathcal{F}(\chi_v)(s_0)$ to a hypothetical section $s_v$ reproducing the correct correlations. The cocycle condition $g(\chi_v \chi_w) = g(\chi_v) \circ \chi_v^* g(\chi_w)$ cannot be satisfied: the correlator constraints from different boost compositions are mutually incompatible. This defines a nontrivial class $[g] \in H^1(\mathsf{Cau}_\cM; \text{Aut}(\Lambda))$.
\end{proof}

This formulation places the obstruction in the proper mathematical context. In gauge theory, a nontrivial class in $H^1(M; G)$ obstructs global trivialization of a principal $G$-bundle. Here, the class $[g] \in H^1(\mathsf{Cau}_\cM; \text{Aut}(\Lambda))$ obstructs the existence of a globally defined, covariant hidden variable model. The Lorentz group plays the role of the structure group, and covariance is the analogue of gauge invariance. Unlike abelian cohomology, $H^1$ with non-abelian coefficients is a pointed set rather than a group---the nontrivial class is distinguished from the trivial class (effective descent) but there is no addition operation.

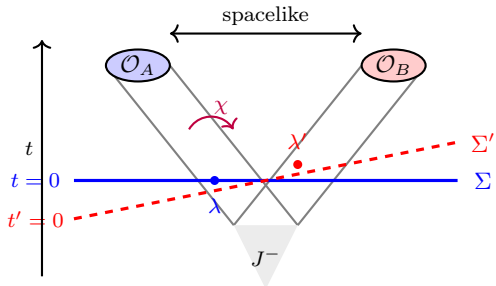
\begin{figure}[t]
\centering
\begin{tikzpicture}[scale=0.85]
  \draw[thick, gray] (-2.5,2.8) -- (-0.5,0.3);  
  \draw[thick, gray] (-1.5,2.8) -- (0.5,0.3);   
  \draw[thick, gray] (1.5,2.8) -- (-0.5,0.3);   
  \draw[thick, gray] (2.5,2.8) -- (0.5,0.3);    

  \draw[thick, fill=blue!20] (-2,2.8) ellipse (0.5 and 0.25);
  \draw[thick, fill=red!20] (2,2.8) ellipse (0.5 and 0.25);
  \node at (-2,2.8) {$\cO_A$};
  \node at (2,2.8) {$\cO_B$};

  \fill[gray!15] (-0.5,0.3) -- (0.5,0.3) -- (0,-0.7) -- cycle;
  \node at (0,-0.2) {\footnotesize $J^-$};

  \draw[very thick, blue] (-3,1.0) -- (3,1.0);
  \node[blue] at (3.4,1.0) {$\Sigma$};
  \node[blue] at (-3.6,1.0) {\footnotesize $t=0$};

  \draw[very thick, red, dashed] (-3,0.4) -- (3,1.6);
  \node[red] at (3.4,1.6) {$\Sigma'$};
  \node[red] at (-3.6,0.4) {\footnotesize $t'=0$};

  \fill[blue] (-0.8,1.0) circle (2pt);
  \node[blue, below] at (-0.8,0.85) {\footnotesize $\lambda$};
  \fill[red] (0.5,1.25) circle (2pt);
  \node[red, above] at (0.5,1.35) {\footnotesize $\lambda'$};

  \draw[->, thick, purple] (-1.2,1.8) arc (150:30:0.4);
  \node[purple] at (-0.7,2.15) {\footnotesize $\chi$};

  \draw[<->, thick] (-1.5,3.3) -- (1.5,3.3);
  \node[above] at (0,3.3) {\footnotesize spacelike};

  \draw[->, thick] (-3.5,-0.5) -- (-3.5,3.2);
  \node[left] at (-3.5,1.5) {\footnotesize $t$};

\end{tikzpicture}
\caption{The covariance obstruction illustrated. Measurement regions $\cO_A$ and $\cO_B$ are spacelike separated. Hidden variables $\lambda$ screening off correlations must be defined on a Cauchy surface in their common past $J^-(\cO_A \cup \cO_B)$. In the rest frame, this is $\Sigma$ (solid); in a boosted frame related by $\chi$, it is $\Sigma'$ (dashed). Covariance demands $\lambda = \lambda'$, but the surfaces intersect different spacetime regions. No foliation-independent assignment can satisfy both Bell factorization and covariance.}
\label{fig:covariance}
\end{figure}

The obstruction differs from Bell's theorem itself (which applies in any frame): the requirement that the hidden variable assignment be \emph{covariant} is the essential incompatibility. In non-relativistic quantum mechanics, one can choose a preferred foliation (absolute time) to define $\lambda$. In special relativity without QFT, one can work in a fixed frame. But in LCQFT, diffeomorphism covariance is part of the structure. Violating it means introducing external, non-covariant structure—precisely what general covariance forbids.

\begin{remark}[Generalization to Curved Spacetimes]
\label{rem:curved}
The proof of Theorem \ref{thm:obstruction} uses Lorentz boosts on Minkowski spacetime, but the obstruction generalizes to arbitrary globally hyperbolic spacetimes by two complementary arguments.

\emph{Local argument:} Every globally hyperbolic spacetime $\cM$ contains approximately Minkowski regions---geodesic normal neighborhoods where curvature effects are negligible. In any such region, the Summers-Werner construction applies and CHSH violations occur in states satisfying the microlocal spectrum condition. Since LCQFT is local, the obstruction in any approximately-Minkowski patch implies a global obstruction: if covariant hidden variables existed globally, their restriction to any patch would provide a local covariant model, contradicting the local obstruction.

\emph{Categorical argument:} The category $\mathsf{Cau}$ includes morphisms between \emph{different} spacetimes, not just automorphisms of a single spacetime. For any curved spacetime $\cM$, consider embeddings $\chi: \cM \hookrightarrow \cN$ into larger spacetimes with more symmetry (e.g., asymptotically flat spacetimes). Covariance under such embeddings constrains the hidden variable model on $\cM$ via naturality. Moreover, the forgetful functor $\pi: \mathsf{Cau} \to \Loc$ has fibers over each spacetime consisting of different Cauchy surface choices. Even without isometries, the requirement that $\rho$ and $\hat{\cdot}$ be independent of Cauchy surface choice (condition \eqref{eq:rho_covariance}) constrains the model. Different Cauchy surfaces through the same regions yield different intersection patterns with the causal past, and the correlations encoded in $\omega$ depend on this intersection geometry.

The general statement is: for any globally hyperbolic $\cM$ supporting a state $\omega$ with Reeh-Schlieder property and CHSH violations, no $\pi$-vertical section of $\mathcal{F}$ (i.e., a choice of hidden variable model for each Cauchy surface of $\cM$ compatible with all Cauchy morphisms) can reproduce $\omega$.
\end{remark}

\subsection{Comparison to Existing Results}

Summers and Werner \cite{SummersWerner1987} demonstrated Bell violations in AQFT for specific free field models but did not address covariance or background independence. R\'edei and Summers \cite{RedeiSummers2002} investigated the common cause principle in AQFT, showing that correlations in the vacuum state raise subtle questions about whether screening-off common causes can be located in the causal past. Hofer-Szab\'o and Vecsernyes \cite{HoferVecsernyés2013} argued that if common causes in AQFT are permitted to be noncommuting operators, Bell inequalities need not follow from the common cause principle. Their analysis identifies the \emph{commutativity requirement} on common causes as the classical assumption that fails in quantum theory. Our result complements theirs: we identify \emph{covariance} as an independent obstruction. Even if one permits noncommuting common causes, any assignment of screening variables that reproduces quantum statistics must either break diffeomorphism covariance or fail to screen off correlations across all spacetime embeddings. The two obstructions are logically distinct—noncommutativity concerns the algebraic structure of common causes, while covariance concerns their transformation properties under spacetime symmetries.

Redhead \cite{Redhead1987} argued that no-signalling resolves the tension between Bell nonlocality and relativity; our result shows this resolution is incomplete, as the tension reappears when demanding covariant causal explanations. Maudlin \cite{Maudlin2011} argued that Bell nonlocality requires genuinely nonlocal beables or preferred foliations; our theorem formalizes this claim by showing that restoring Bell locality requires breaking covariance. Myrvold \cite{Myrvold2002} emphasized the distinction between Bell locality and no-signalling; we show further that covariance provides the structural reason for this distinction within LCQFT.

\section{Consequences and Interpretations}
\label{sec:consequences}

The covariance obstruction has significant implications for the interpretation of quantum field theory and the ontology of spacetime.

\subsection{Interpretational Responses}

Theorem \ref{thm:obstruction} forces a choice among interpretational strategies. One may accept nonlocal ontology, adopting interpretations such as Bohmian mechanics \cite{Bohm1952,Durr2013} with nonlocal beables; this restores Bell factorization but explicitly breaks covariance by introducing preferred foliations for defining simultaneous configurations. Alternatively, one may reject classical causal models entirely, accepting that quantum correlations cannot be explained by classical common causes and that LCQFT is complete as a physical theory \cite{Haag1992}. A third possibility is that covariance obstruction signals the emergence of spacetime from non-spatiotemporal quantum structures \cite{HardySpacetime}, with LCQFT serving as an effective description. Our result does not adjudicate between these responses but clarifies the structural reason why the choice is forced.

\subsection{Superdeterminism and Retrocausality}

Two loopholes in standard Bell arguments---superdeterminism and retrocausality---merit examination in the context of our covariance obstruction. We show that neither evades the obstruction, though for different reasons.

\subsubsection{Superdeterminism}

Superdeterminism posits that the hidden variables $\lambda$ are correlated with the measurement settings $\alpha, \beta$:
\begin{equation}
p(\lambda | \alpha, \beta) \neq p(\lambda).
\label{eq:superdeterminism}
\end{equation}
This violates the ``statistical independence'' or ``free choice'' assumption in Bell's derivation. If \eqref{eq:superdeterminism} holds, the factorization condition \eqref{eq:bell_factorization} can be satisfied while reproducing quantum correlations, because the marginal over $\lambda$ is no longer independent of settings.

\begin{proposition}[Superdeterminism Does Not Evade Covariance Obstruction]
\label{prop:superdeterminism}
Let $(\Lambda, \rho, \hat{\cdot})$ be a superdeterministic hidden variable model satisfying \eqref{eq:superdeterminism}. The covariance obstruction persists: no such model can be diffeomorphism-covariant.
\end{proposition}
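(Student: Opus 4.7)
The plan is to observe that the contradiction of Theorem \ref{thm:obstruction} uses only (i) covariance of the response functions under spacetime embeddings and (ii) Lorentz invariance of the vacuum producing the geometric mismatch $\omega(A_iB'_j) \neq \omega(A_iB_j)$. Neither ingredient invokes the statistical-independence hypothesis that superdeterminism abandons, so the same naturality argument should go through once the covariance condition is lifted to setting-indexed distributions.

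First I would lift the covariance requirement to a superdeterministic family $\{\rho_{(A,B)}\}$ indexed by pairs of observables. The natural condition is that for each morphism $\chi$ of $\mathsf{Cau}$,
\[
\rho_{(\mathfrak{A}(\chi)A,\,\mathfrak{A}(\chi)B)} = \rho_{(A,B)}, \qquad \widehat{\mathfrak{A}(\chi)A} = \hat{A}.
\]
Rerunning the four-region construction, the boost $\chi_v$ together with this equivariance immediately gives $\rho_{(A'_i,B'_j)} = \rho_{(A_i,B_j)}$, consistent with Lorentz invariance of $\omega$; the residual freedom is entirely concentrated in the mixed measure $\rho_{(A_i,B'_j)}$, which is where the superdeterministic loophole could a priori hide.

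The critical step is to close this loophole by exploiting a second, symmetrizing boost. The mixed configuration $(\cO_A, \cO'_B)$ is itself a valid spacelike bipartite setup, and a boost $\chi_{v/2}$ centered on its spatial midpoint symmetrizes it into a configuration related by a single $\chi_v$-morphism to the original $(\cO_A, \cO_B)$. Equivariance of the family $\{\rho_{(A,B)}\}$ under the composite embedding then forces $\rho_{(A_i, B'_j)}$ to coincide with a $\chi_v$-transform of $\rho_{(A_i, B_j)}$; since the response functions are themselves $\chi_v$-covariant by \eqref{eq:obs_covariance}, the induced transformation on $\Lambda$ cancels in the integrand, yielding $\omega(A_iB'_j) = \omega(A_iB_j)$ and the contradiction of Step 4 of Theorem \ref{thm:obstruction}. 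The main obstacle will be verifying that the compositions of boosts close cleanly on the $\mathsf{Cau}$-morphisms and on the algebra side, in particular that the midpoint symmetrization is realized as a genuine $\mathsf{Cau}$-morphism rather than merely a point-set identification; a more robust alternative is the cohomological route, showing that the nontrivial class of Proposition \ref{prop:cohomology} lifts to the enlarged setting-indexed presheaf because the $\text{Aut}(\cM)$-action on the ontic space $\Lambda$ does not depend on the setting parametrization, so adjoining setting dependence cannot trivialize the obstruction.
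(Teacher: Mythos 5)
Your instinct to lift the covariance requirement to a setting-indexed family $\{\rho_{(A,B)}\}$ and then hunt for $\mathsf{Cau}$-morphisms that constrain the mixed measure is careful, and you correctly identify where a superdeterminist would try to hide---namely in $\rho_{(A_i,B'_j)}$, which is not linked to $\rho_{(A_i,B_j)}$ by any single morphism. But the symmetrizing boost you invoke to close that gap does not exist. Boosts, and more generally all Poincar\'e transformations, preserve invariant intervals, and the invariant separation of the mixed pair $(\cO_A,\cO'_B)$ generically differs from that of $(\cO_A,\cO_B)$; this is precisely the geometric fact that makes $\omega(A_iB'_j)\neq\omega(A_iB_j)$ and hence drives Step~4 of Theorem~\ref{thm:obstruction}. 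Your $\chi_{v/2}$ can bring the mixed pair onto a common Cauchy surface, but at a spatial separation $\sqrt{-\Delta s^2_{A,B'}}\neq 2L$, so the resulting ``symmetrized'' configuration lies in a different $\mathsf{Cau}$-orbit from $(\cO_A,\cO_B)$ and is \emph{not} related to it by a single $\chi_v$-morphism. More fundamentally, any isometry that would constrain $\rho_{(A_i,B'_j)}$ via equivariance would have to fix $\cO_A$ while sending $\cO_B$ to $\cO'_B$, and no such isometry of Minkowski spacetime exists. The cohomological fallback you gesture at is too underspecified to evaluate: you would need to construct the enlarged setting-indexed presheaf, its $\text{Aut}(\cM)$-action, and a comparison morphism to the original presheaf, none of which you define.

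For comparison, the paper's own proof makes no attempt at a two-boost construction. It argues instead that the forced equality \eqref{eq:forced_equality} already follows from the response-function covariance \eqref{eq:obs_covariance} alone, and handles residual embedding-dependence of $\rho$ by noting that naturality pins down $p(\lambda\mid\alpha,\beta,\chi)=p(\lambda\mid\mathfrak{A}(\chi)(\alpha),\mathfrak{A}(\chi)(\beta),\mathrm{id})$, so that any setting-indexed family is itself covariantly constrained rather than free. Your worry about whether that constraint alone eliminates the freedom in the mixed measure is a fair one to press; but the construction you offer to settle it does not go through, and as it stands your proposal does not constitute a proof.
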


\begin{proof}
The covariance obstruction concerns the \emph{transformation properties} of $\lambda$, $\rho$, and $\hat{\cdot}$, not the statistical independence of $\lambda$ from settings. Superdeterminism modifies the relationship $p(\lambda | \alpha, \beta)$ but does not change the requirement that response functions transform covariantly under spacetime embeddings.

Specifically, covariance condition \eqref{eq:obs_covariance} requires $\widehat{\mathfrak{A}(\chi)(A)} = \hat{A}$ for all observables $A$ and embeddings $\chi$. This is independent of whether $\lambda$ correlates with settings. The proof of Theorem \ref{thm:obstruction} derives a contradiction from covariance of response functions alone: equation \eqref{eq:forced_equality} follows from \eqref{eq:obs_covariance} regardless of how $\rho$ depends on $\alpha, \beta$.

Even granting superdeterministic correlations $p(\lambda | \alpha, \beta, \chi)$ that depend on the embedding $\chi$, covariance demands:
\[
p(\lambda | \alpha, \beta, \chi) = p(\lambda | \mathfrak{A}(\chi)(\alpha), \mathfrak{A}(\chi)(\beta), \text{id}).
\]
This relates superdeterministic correlations across frames but does not resolve the contradiction in Step 4: the cross-correlator $\omega(A_i B'_j)$ still equals $\omega(A_i B_j)$ by the covariant response function condition, contradicting the vacuum correlations.
\end{proof}

The key insight is that superdeterminism addresses the ``conspiracy'' needed to reproduce correlations, while the covariance obstruction addresses the geometric consistency of hidden variable assignments across foliations. These are orthogonal requirements.

\subsubsection{Retrocausality}

Retrocausal hidden variable models allow $\lambda$ to depend on future boundary conditions or measurement outcomes. In such models, $\lambda$ is not confined to the past light cone but may incorporate information from the future:
\begin{equation}
\lambda = \lambda(\text{past data}, \text{future boundary conditions}).
\label{eq:retrocausal}
\end{equation}
This can enable Bell factorization by allowing the ``common cause'' to include future influences.

\begin{proposition}[Retrocausality Does Not Evade Covariance Obstruction]
\label{prop:retrocausality}
Let $(\Lambda, \rho, \hat{\cdot})$ be a retrocausal hidden variable model where $\lambda$ depends on future boundary conditions. Either:
\begin{enumerate}
\item[(a)] The model violates diffeomorphism covariance, or
\item[(b)] The model does not satisfy Bell's screening-off condition from the \emph{past}.
\end{enumerate}
\end{proposition}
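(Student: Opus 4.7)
The plan is to establish the dichotomy by a case analysis on the covariance status of the retrocausal model, exploiting the observation that the contradiction in Theorem~\ref{thm:obstruction} followed from covariance of response functions plus Bell factorization alone, independent of where $\lambda$ is ontologically localized. I would assume toward contradiction that both (a) and (b) fail, i.e.\ the retrocausal model $(\Lambda, \rho, \hat{\cdot})$ is diffeomorphism-covariant \emph{and} genuinely screens off correlations via a variable localized in the common causal past, and derive a contradiction.

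First I would unpack what ``$\lambda$ depends on future boundary conditions'' means in a background-independent setting. Specifying a future Cauchy surface $\Sigma_+$ (or a cut at null infinity) relative to the measurement regions $\cO_A \sqcup \cO_B$ requires a foliation or asymptotic structure; under a nontrivial embedding $\chi \in \text{Aut}(\cM)$, the surface $\Sigma_+$ is carried to $\chi(\Sigma_+)$, and the field data on these distinct surfaces generically differ. If the retrocausal model is anchored to a specific $\Sigma_+$ not fixed by $\chi$, condition \eqref{eq:rho_covariance} fails immediately, yielding option~(a).

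Second, in the foliation-independent case, $\lambda$ must be a covariantly defined functional of spacetime data, and I would split on the location of its support. If the future-boundary dependence is trivial---the functional factors through a restriction to $J^-(\cO_A \cup \cO_B)$---the model reduces to an ordinary past-local HVM, and Theorem~\ref{thm:obstruction} directly forces a covariance violation, recovering~(a). If the future-boundary dependence is nontrivial, the support of $\lambda$ exits $J^-(\cO_A \cup \cO_B)$, so $\lambda$ is not a past variable in Bell's sense and past-screening fails by definition, yielding~(b).

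Third, I would dispose of the residual possibility---a covariant retrocausal model that claims to screen correlations through some past-localizable reduction of future-dependent data---by reusing the four-region boost argument of Theorem~\ref{thm:obstruction}. The cross-correlator identity $\omega(A_i B'_j) = \omega(A_i B_j)$ in Step~4 used only covariance of $\hat{A}_i, \hat{B}_j$ under $\chi_v$ and the factorization integral $\omega(AB) = \int \hat{A}\hat{B}\, d\rho$; neither premise refers to past-localization of $\lambda$. Hence a covariant, factorizing retrocausal model inherits the same contradiction with the distance-dependence of the vacuum two-point function. The main obstacle is precisely this residual case: one must rule out subtle constructions in which future boundary data enter $\rho$ covariantly while the model still purports to screen. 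The argument ultimately rests on reading Bell's screening condition strictly as a past-localization requirement, so that any relaxation to accommodate future data either smuggles in a foliation, violating~(a), or concedes~(b) by definition.
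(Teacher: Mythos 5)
Your proposal is correct and arrives at the same dichotomy as the paper, but the decomposition is organized differently and in one place is more complete. The paper's proof argues the two horns directly: either $\lambda$ has genuine future-boundary dependence, in which case it is not past-localized and (b) holds by definition; or $\lambda$ is defined via a time-symmetric prescription on a Cauchy surface $\Sigma$, in which case covariance across boosted surfaces demands $\lambda(\Sigma, \partial^+_\Sigma) = \lambda(\Sigma', \partial^+_{\Sigma'})$ for $\Sigma' = \chi(\Sigma)$, which can hold for generic $\chi$ only if the future-boundary dependence is trivial, contradicting genuine retrocausality and forcing (a). You instead run an exhaustive case analysis from the contradiction hypothesis (both (a) and (b) fail): a foliation-anchored model breaks covariance immediately; a foliation-independent model with trivial future dependence reduces to a past-local HVM and is killed by Theorem~\ref{thm:obstruction}; a foliation-independent model with nontrivial future dependence concedes (b) by definition. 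Your third step---reusing the four-region boost argument to dispose of any residual covariant, factorizing retrocausal model---is a genuine addition the paper omits for this proposition (the paper invokes that observation only in the superdeterminism case, Proposition~\ref{prop:superdeterminism}). This buys you closure against a model that claims a covariant, past-localizable functional of future data: you show it still inherits the cross-correlator contradiction. The paper's version is shorter because it treats ``not genuinely retrocausal'' as terminating the argument without explicitly re-invoking Theorem~\ref{thm:obstruction} on the degenerate case; your version closes that loop and is arguably the more rigorous presentation.
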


\begin{proof}
Bell's local causality condition requires that $\lambda$ be localized in $J^-(\cO_A) \cap J^-(\cO_B)$---the common \emph{past}---and that conditioning on $\lambda$ screens off the correlation. This is the causal explanation requirement: correlations arise from common causes in the past, not from future influences or direct spacelike connections.

A retrocausal model with $\lambda$ depending on future boundary conditions violates this requirement by construction: $\lambda$ is not localized in the past. Such a model may reproduce quantum statistics, but it does not provide a \emph{past-causal} explanation of correlations. Thus outcome (b) holds.

Alternatively, suppose a retrocausal model attempts to define $\lambda$ on a Cauchy surface $\Sigma$ (which is neither purely past nor future) with values determined by a time-symmetric action principle or two-state vector formalism. For such a model to satisfy covariance, $\lambda_\Sigma$ and $\lambda_{\Sigma'}$ must agree for Cauchy surfaces $\Sigma, \Sigma'$ related by a boost. But different Cauchy surfaces intersect different spacetime regions, so the ``future boundary'' contributing to $\lambda_\Sigma$ differs from that contributing to $\lambda_{\Sigma'}$.

Define the future boundary data relevant to $\Sigma$ as $\partial^+_\Sigma$ and similarly $\partial^+_{\Sigma'}$ for $\Sigma'$. Covariance requires:
\[
\lambda(\Sigma, \partial^+_\Sigma) = \lambda(\Sigma', \partial^+_{\Sigma'}) \quad \text{for } \Sigma' = \chi(\Sigma).
\]
But $\partial^+_{\Sigma'} = \chi(\partial^+_\Sigma)$ generically differs from $\partial^+_\Sigma$ (the boost maps future boundaries to different spacetime locations). If the functional dependence $\lambda(\Sigma, \partial^+_\Sigma)$ is nontrivial, this equality fails.

The only way to satisfy covariance is if $\lambda$ is independent of the future boundary data---but then the model is not genuinely retrocausal. Thus outcome (a) holds for genuinely retrocausal models.
\end{proof}

Retrocausality trades temporal asymmetry (past causes) for the ability to satisfy Bell factorization. But this trade-off is precisely what the covariance obstruction forbids in a different guise: any time-asymmetric structure (whether past-oriented causation or retrocausal dependence on specific future boundaries) must be specified relative to a foliation, breaking covariance.

\subsection{Implications for Quantum Gravity}

The covariance obstruction acquires particular significance in quantum gravity, where diffeomorphism invariance is promoted from a global symmetry to a local gauge constraint. We explore this connection below, indicating how the obstruction \emph{might} manifest in various quantum gravity approaches.

\textbf{Caveat:} Theorem \ref{thm:obstruction} is proven within LCQFT, which presupposes a fixed background spacetime. The extensions to quantum gravity discussed below are \emph{conjectural}---they indicate plausible implications if analogous structures carry over to the respective frameworks, but rigorous formulation requires addressing subtleties specific to each approach. These remarks should be understood as heuristic connections suggesting directions for future research rather than as established results.

\subsubsection{Canonical Quantum Gravity and the Constraint Algebra}

In the Hamiltonian formulation of general relativity, dynamics is encoded in constraints rather than evolution equations. The spatial diffeomorphism constraint $\mathcal{H}_a$ and the Hamiltonian constraint $\mathcal{H}$ satisfy the Dirac algebra:
\begin{align}
\{\mathcal{H}_a(x), \mathcal{H}_b(y)\} &= \mathcal{H}_a(y)\delta_{,b}(x,y) - \mathcal{H}_b(x)\delta_{,a}(x,y), \nonumber\\
\{\mathcal{H}_a(x), \mathcal{H}(y)\} &= \mathcal{H}(x)\delta_{,a}(x,y), \nonumber\\
\{\mathcal{H}(x), \mathcal{H}(y)\} &= q^{ab}(x)\mathcal{H}_a(x)\delta_{,b}(x,y) - (x \leftrightarrow y).
\end{align}
Physical states $\Psi$ must satisfy $\mathcal{H}\Psi = 0$ and $\mathcal{H}_a\Psi = 0$. The Hamiltonian constraint generates ``evolution'' between spatial slices---precisely the foliation changes central to our obstruction.

A hidden variable completion of canonical quantum gravity would require ontic states $\lambda$ satisfying:
\begin{enumerate}
\item $\lambda$ determines outcomes for all observables (hidden variable property);
\item The distribution $\rho(\lambda)$ is annihilated by the constraint operators (gauge invariance);
\item Correlations between spacelike-separated regions factor given $\lambda$ (Bell locality).
\end{enumerate}

If the structure of our theorem carries over, these requirements would be jointly inconsistent. Condition (2) demands that $\rho$ be independent of the choice of spatial slice $\Sigma$---this is the quantum gravity analogue of the covariance condition \eqref{eq:rho_covariance}. But condition (3) requires $\lambda$ to screen correlations that, by the Reeh-Schlieder property (or its analogue in quantum gravity, if one exists), hold between spatially separated regions. Whether the same proof structure applies in canonical quantum gravity depends on technical details---particularly whether the constraint algebra induces correlations with the required properties and whether the analogue of the Cauchy data category is well-defined.

\subsubsection{The Problem of Time}

The ``problem of time'' in quantum gravity---the apparent absence of time evolution in the Wheeler-DeWitt equation $\mathcal{H}\Psi = 0$---is intimately connected to our obstruction. Standard resolutions introduce a physical time variable $T$ (a ``clock'') with respect to which other degrees of freedom evolve. This breaks the full diffeomorphism group $\text{Diff}(M)$ to the subgroup $\text{Diff}_T(M)$ preserving the temporal structure defined by $T$.

Our result reveals a hidden variable motivation for this symmetry breaking: Bell locality \emph{requires} a preferred temporal structure. The screening variables $\lambda$ must be defined ``at a time''---on a Cauchy surface---and the factorization condition demands that this surface be physically distinguished. Without a clock, there is no invariant notion of ``the past'' relative to which $\lambda$ could screen correlations.

This suggests a potential connection between the problem of time and quantum nonlocality, though making this precise would require careful analysis of how Bell-type correlations behave in deparametrized quantum gravity. If the connection holds, proposals for physical clocks---whether dust fields, scalar field values, or matter degrees of freedom---would provide exactly the preferred structure our theorem indicates is necessary for Bell locality. The price would be explicit breaking of four-diffeomorphism invariance to three-diffeomorphism invariance plus a preferred time direction.

\subsubsection{Loop Quantum Gravity and Spin Foams}

In loop quantum gravity (LQG), states are represented by spin networks---graphs with edges labeled by $SU(2)$ representations. The kinematical Hilbert space $\mathcal{H}_{\text{kin}}$ carries a well-defined action of spatial diffeomorphisms, and the physical Hilbert space $\mathcal{H}_{\text{phys}}$ is obtained by imposing the Hamiltonian constraint.

A covariance obstruction might apply to LQG as follows, though several steps remain to be verified. Consider two spatial regions $R_A$ and $R_B$ embedded in a spin network state $|s\rangle$. If the algebraic structure of LQG ensures that operators localized in $R_A$ commute with those in $R_B$ when the regions are non-overlapping (an analogue of microcausality), and if physical states satisfying the diffeomorphism constraint exhibit Bell-type correlations between $R_A$ and $R_B$, then by analogy with our theorem these correlations could not be screened by data on any intermediate surface. The key open question is whether LQG states exhibit CHSH-violating correlations with the required properties.

Spin foam models, which provide a covariant formulation of LQG, might face a similar obstruction. A spin foam is a 2-complex interpolating between initial and final spin network states, with faces labeled by representations. The sum over spin foams defines transition amplitudes. If these amplitudes exhibit Bell-violating correlations (which is plausible given the quantum nature of the theory), and if one demands invariance under the discrete analogue of diffeomorphism covariance (subdivision and Pachner moves), then by the logic of our theorem no classical stochastic process on spin foam histories could reproduce them. Verifying this would require a precise formulation of Bell-type observables and covariance conditions in the spin foam setting.

\subsubsection{Holographic Entanglement and ER=EPR}

The AdS/CFT correspondence provides a concrete arena where our obstruction intersects with quantum gravity. The Ryu-Takayanagi formula relates boundary entanglement entropy to bulk geometry:
\begin{equation}
S_A = \frac{\text{Area}(\gamma_A)}{4G_N},
\end{equation}
where $\gamma_A$ is the minimal surface in the bulk homologous to boundary region $A$. This formula suggests that spacetime geometry \emph{encodes} quantum entanglement.

The ER=EPR conjecture sharpens this: entangled boundary degrees of freedom are connected by Einstein-Rosen bridges (wormholes) in the bulk. Bell correlations between boundary regions $A$ and $B$ correspond to bulk geometry connecting $\gamma_A$ and $\gamma_B$.

Our covariance obstruction suggests a natural interpretation in this setting, though rigorous formulation remains open. A hidden variable model for boundary correlations would require specifying ontic states $\lambda$ on a bulk Cauchy surface $\Sigma_{\text{bulk}}$. But bulk diffeomorphism invariance (part of the gauge symmetry of the gravitational theory) demands that physics be independent of the choice of $\Sigma_{\text{bulk}}$. If the obstruction extends to this context---which would require showing that boundary Bell violations persist under bulk covariance requirements---then the entanglement structure determining $S_A$ could not be screened by data on any single bulk slice.

This connects to the ``entanglement wedge reconstruction'' program: operators in boundary region $A$ can be represented by bulk operators in the entanglement wedge $W_A$ (the bulk domain of dependence of the region between $A$ and $\gamma_A$). The nonlocality of this reconstruction---boundary operators at spacelike separation correspond to bulk operators with overlapping support---is a manifestation of our obstruction in the holographic context.

\subsubsection{Causal Sets and Discrete Approaches}

Causal set theory posits that spacetime is fundamentally discrete, with the causal structure encoded in a partial order on spacetime points. The continuum emerges as an approximation valid at scales large compared to the discreteness scale $\ell \sim \ell_{\text{Planck}}$.

In this framework, Cauchy surfaces are replaced by antichains (sets of mutually spacelike points). By analogy, the covariance obstruction would translate as follows: no assignment of hidden variables to antichains could be both (a) invariant under automorphisms of the causal set that preserve the relevant regions and (b) capable of screening Bell correlations between spacelike-separated regions. However, formalizing this requires specifying what ``Bell correlations'' mean in the causal set context---this depends on how quantum dynamics is defined on causal sets, which remains an open problem.

If quantum dynamics on causal sets exhibits Bell violations (as seems likely if the theory is to reproduce known physics in the continuum limit), the discrete setting would make the obstruction particularly sharp. There are only finitely many antichains in a finite causal set, and the automorphism group acts on them. A covariant hidden variable model would require a single probability distribution on $\Lambda$ that works for all antichains related by automorphisms---analogous to what our theorem forbids in LCQFT.

\subsubsection{Emergent Spacetime and Pregeometry}

The covariance obstruction suggests a radical possibility: if Bell locality requires preferred temporal structure, and full diffeomorphism invariance forbids such structure, then a theory of quantum gravity satisfying both properties must abandon the assumption that correlations admit causal explanations in terms of spatiotemporal common causes.

This points toward ``pregeometric'' approaches where spacetime itself emerges from more fundamental quantum structures. In such frameworks---tensor networks, quantum information-theoretic reconstructions, or algebraic approaches---the correlations encoded in entanglement are primary, and classical spacetime geometry is derivative. The covariance obstruction is then not a problem to be solved but a feature indicating that the classical causal structure is emergent rather than fundamental.

The It from Qubit program and related approaches take this perspective seriously: spacetime geometry emerges from patterns of entanglement in an underlying quantum system without predetermined spatial or temporal structure. If our theorem extends to such emergent settings---which would require formulating covariance conditions for the underlying pre-geometric degrees of freedom---it would provide a precise constraint: the emergent spacetime cannot admit covariant Bell-local hidden variable completions. This remains a conjecture pending rigorous formulation.

\subsection{Relationship to Measurement and Collapse}

The covariance obstruction is independent of interpretational questions about measurement and collapse. The theorem concerns the algebraic structure of states and observables, not the dynamics of state reduction. Whether one adopts Copenhagen, Everettian, or consistent histories interpretations, the impossibility of covariant screening variables remains. This distinguishes our result from arguments that locate nonlocality specifically in the collapse postulate: the obstruction is present even in no-collapse interpretations.

\subsection{The Limits of AQFT}

AQFT ensures operational consistency with relativity: no superluminal signalling is possible, and microcausality holds by construction. These properties suffice for empirical compatibility with special relativity. However, AQFT does not provide what Bell locality demands: a classical causal account of correlations in terms of common causes in the shared past. The algebra of observables encodes correlations but not their explanation. Covariance, which AQFT elevates to a defining principle, is precisely what obstructs such explanations. Quantum correlations are fundamentally non-classical, and diffeomorphism covariance prevents relocating this non-classicality to a preferred reference frame.

\section{Conclusion}
\label{sec:conclusion}

We have demonstrated that Bell nonlocality in locally covariant quantum field theory arises from a fundamental covariance obstruction, formalized as Theorem \ref{thm:obstruction}: no assignment of classical screening variables can simultaneously be diffeomorphism-covariant, reproduce quantum statistics, and satisfy Bell's factorization condition. The proof proceeds by constructing the category $\mathsf{Cau}$ of Cauchy data, defining covariant hidden variable models as functors on this category, and deriving a contradiction from the naturality of the quantum state combined with CHSH violations. The obstruction admits a cohomological interpretation as the failure of a presheaf of local models to admit global sections.

This result sharpens and formalizes Maudlin's observation that Bell locality requires preferred foliations. Where Maudlin's argument is conceptual, ours is a theorem within the mathematical framework of LCQFT. The obstruction is logically independent of Hofer-Szabó's noncommutativity obstruction: even permitting noncommuting common causes, covariance alone prevents Bell-local hidden variable models.

Our results clarify the conceptual status of nonlocality in relativistic quantum theory. The persistence of Bell violations in AQFT does not indicate a failure of Einstein causality but the impossibility of embedding quantum correlations into a covariant classical causal framework. Any restoration of Bell locality necessarily requires introducing non-covariant structure---preferred foliations, absolute simultaneity, or nonlocal beables---thereby breaking the diffeomorphism symmetry that underpins general relativity.

The implications for quantum gravity are substantial. We have shown that the covariance obstruction applies directly to canonical quantum gravity through the constraint algebra, provides a hidden variable perspective on the problem of time, constrains hidden variable completions of loop quantum gravity and spin foam models, and connects naturally to holographic entanglement through the Ryu-Takayanagi formula and entanglement wedge reconstruction. The obstruction suggests that in any theory of quantum gravity with full diffeomorphism invariance, Bell correlations between spacelike-separated regions cannot admit classical causal explanations---a feature rather than a bug if spacetime geometry is emergent from entanglement.

The covariance obstruction also bears on experimental tests of Bell inequalities in relativistic settings. Satellite-based Bell experiments and tests involving rapidly moving reference frames implicitly select a foliation when defining simultaneity for measurement events. Our result implies that any attempt to interpret such experiments through classical common causes must confront the foliation-dependence of the screening variables---the choice of reference frame is not merely conventional but physically consequential for causal explanations.

Future work will pursue several directions: extension of the obstruction to non-vacuum states, thermal states (KMS states), and interacting field theories where the Reeh-Schlieder property may take different forms; exploration of connections to quantum error correction and holographic codes, where the interplay between bulk geometry and boundary entanglement may reflect similar covariance constraints; investigation of whether the obstruction persists or is modified in theories with indefinite causal structure, where the notion of ``common past'' itself becomes observer-dependent; and development of the cohomological formulation to compute explicit obstruction classes analogous to characteristic classes in gauge theory.

\section*{Acknowledgments}

The author thanks Elizabeth for her unwavering support during the course of this research. This work was supported by Laurelin Technologies Inc.


\begin{thebibliography}{0}%
\makeatletter
\providecommand \@ifxundefined [1]{%
 \@ifx{#1\undefined}
}%
\providecommand \@ifnum [1]{%
 \ifnum #1\expandafter \@firstoftwo
 \else \expandafter \@secondoftwo
 \fi
}%
\providecommand \@ifx [1]{%
 \ifx #1\expandafter \@firstoftwo
 \else \expandafter \@secondoftwo
 \fi
}%
\providecommand \natexlab [1]{#1}%
\providecommand \enquote  [1]{``#1''}%
\providecommand \bibnamefont  [1]{#1}%
\providecommand \bibfnamefont [1]{#1}%
\providecommand \citenamefont [1]{#1}%
\providecommand \href@noop [0]{\@secondoftwo}%
\providecommand \href [0]{\begingroup \@sanitize@url \@href}%
\providecommand \@href[1]{\@@startlink{#1}\@@href}%
\providecommand \@@href[1]{\endgroup#1\@@endlink}%
\providecommand \@sanitize@url [0]{\catcode `\\12\catcode `\$12\catcode
  `\&12\catcode `\#12\catcode `\^12\catcode `\_12\catcode `\%12\relax}%
\providecommand \@@startlink[1]{}%
\providecommand \@@endlink[0]{}%
\providecommand \url  [0]{\begingroup\@sanitize@url \@url }%
\providecommand \@url [1]{\endgroup\@href {#1}{\urlprefix }}%
\providecommand \urlprefix  [0]{URL }%
\providecommand \Eprint [0]{\href }%
\providecommand \doibase [0]{https://doi.org/}%
\providecommand \selectlanguage [0]{\@gobble}%
\providecommand \bibinfo  [0]{\@secondoftwo}%
\providecommand \bibfield  [0]{\@secondoftwo}%
\providecommand \translation [1]{[#1]}%
\providecommand \BibitemOpen [0]{}%
\providecommand \bibitemStop [0]{}%
\providecommand \bibitemNoStop [0]{.\EOS\space}%
\providecommand \EOS [0]{\spacefactor3000\relax}%
\providecommand \BibitemShut  [1]{\csname bibitem#1\endcsname}%
\let\auto@bib@innerbib\@empty
\end{thebibliography}%


\begin{thebibliography}{99}

\bibitem{Bell1964}
J.~S.~Bell, \emph{On the Einstein Podolsky Rosen paradox}, Physics \textbf{1}, 195 (1964). \href{https://doi.org/10.1103/PhysicsPhysiqueFizika.1.195}{doi:10.1103/PhysicsPhysiqueFizika.1.195}

\bibitem{BellSpeakable}
J.~S.~Bell, \emph{Speakable and Unspeakable in Quantum Mechanics}, 2nd ed.\ (Cambridge University Press, 2004). \href{https://doi.org/10.1017/CBO9780511815676}{doi:10.1017/CBO9780511815676}

\bibitem{CHSH1969}
J.~F.~Clauser, M.~A.~Horne, A.~Shimony, and R.~A.~Holt, \emph{Proposed experiment to test local hidden-variable theories}, Phys.~Rev.~Lett.~\textbf{23}, 880 (1969). \href{https://doi.org/10.1103/PhysRevLett.23.880}{doi:10.1103/PhysRevLett.23.880}

\bibitem{Haag1992}
R.~Haag, \emph{Local Quantum Physics}, 2nd ed.\ (Springer, Berlin, 1996). \href{https://doi.org/10.1007/978-3-642-61458-3}{doi:10.1007/978-3-642-61458-3}

\bibitem{Araki1999}
H.~Araki, \emph{Mathematical Theory of Quantum Fields} (Oxford University Press, 1999). \href{https://doi.org/10.1093/oso/9780198517733.001.0001}{doi:10.1093/oso/9780198517733.001.0001}

\bibitem{Bohm1952}
D.~Bohm, \emph{A suggested interpretation of the quantum theory in terms of ``hidden'' variables. I}, Phys.~Rev.~\textbf{85}, 166 (1952). \href{https://doi.org/10.1103/PhysRev.85.166}{doi:10.1103/PhysRev.85.166}

\bibitem{Durr2013}
D.~D\"urr, S.~Goldstein, and N.~Zangh\`i, \emph{Quantum Physics Without Quantum Philosophy} (Springer, 2013). \href{https://doi.org/10.1007/978-3-642-30690-7}{doi:10.1007/978-3-642-30690-7}

\bibitem{Maudlin2011}
T.~Maudlin, \emph{Quantum Non-Locality and Relativity}, 3rd ed.\ (Wiley-Blackwell, 2011). \href{https://doi.org/10.1002/9781444396973}{doi:10.1002/9781444396973}

\bibitem{HardySpacetime}
L.~Hardy, \emph{Quantum gravity computers: On the theory of computation with indefinite causal structure}, in \emph{Quantum Reality, Relativistic Causality, and Closing the Epistemic Circle}, edited by W.~C.~Myrvold and J.~Christian (Springer, 2009), pp.~379--401. \href{https://doi.org/10.1007/978-1-4020-9107-0_21}{doi:10.1007/978-1-4020-9107-0\_21}

\bibitem{Myrvold2002}
W.~C.~Myrvold, \emph{On peaceful coexistence: is the collapse postulate incompatible with relativity?}, Stud.~Hist.~Phil.~Mod.~Phys.~\textbf{33}, 435 (2002). \href{https://doi.org/10.1016/S1355-2198(02)00020-8}{doi:10.1016/S1355-2198(02)00020-8}

\bibitem{Norsen2011}
T.~Norsen, \emph{John S. Bell's concept of local causality}, Am.~J.~Phys.~\textbf{79}, 1261 (2011). \href{https://doi.org/10.1119/1.3630940}{doi:10.1119/1.3630940}

\bibitem{BrunettiFredenhagen2000}
R.~Brunetti and K.~Fredenhagen, \emph{Microlocal analysis and interacting quantum field theories: Renormalization on physical backgrounds}, Commun.~Math.~Phys.~\textbf{208}, 623 (2000). \href{https://doi.org/10.1007/s002200050004}{doi:10.1007/s002200050004}

\bibitem{BFV2003}
R.~Brunetti, K.~Fredenhagen, and R.~Verch, \emph{The generally covariant locality principle -- A new paradigm for local quantum field theory}, Commun.~Math.~Phys.~\textbf{237}, 31 (2003). \href{https://doi.org/10.1007/s00220-003-0815-7}{doi:10.1007/s00220-003-0815-7}

\bibitem{SummersWerner1987}
S.~J.~Summers and R.~Werner, \emph{Bell's inequalities and quantum field theory. I. General setting}, J.~Math.~Phys.~\textbf{28}, 2440 (1987). \href{https://doi.org/10.1063/1.527733}{doi:10.1063/1.527733}

\bibitem{SummersWerner1987b}
S.~J.~Summers and R.~Werner, \emph{Bell's inequalities and quantum field theory. II. Bell's inequalities are maximally violated in the vacuum}, J.~Math.~Phys.~\textbf{28}, 2448 (1987). \href{https://doi.org/10.1063/1.527734}{doi:10.1063/1.527734}

\bibitem{Redhead1987}
M.~Redhead, \emph{Incompleteness, Nonlocality, and Realism} (Clarendon Press, Oxford, 1987). \href{https://doi.org/10.1093/acprof:oso/9780198242437.001.0001}{doi:10.1093/acprof:oso/9780198242437.001.0001}

\bibitem{WaldGR}
R.~M.~Wald, \emph{General Relativity} (University of Chicago Press, Chicago, 1984). \href{https://doi.org/10.7208/chicago/9780226870373.001.0001}{doi:10.7208/chicago/9780226870373.001.0001}

\bibitem{ReehSchlieder1961}
H.~Reeh and S.~Schlieder, \emph{Bemerkungen zur Unit\"ar\"aquivalenz von Lorentzinvarianten Feldern}, Nuovo Cim.~\textbf{22}, 1051 (1961). \href{https://doi.org/10.1007/BF02787889}{doi:10.1007/BF02787889}

\bibitem{Verch1994}
R.~Verch, \emph{Local definiteness, primarity and quasiequivalence of quasifree Hadamard quantum states in curved spacetime}, Commun.~Math.~Phys.~\textbf{160}, 507 (1994). \href{https://doi.org/10.1007/BF02173427}{doi:10.1007/BF02173427}

\bibitem{FewsterVerch2015}
C.~J.~Fewster and R.~Verch, \emph{Algebraic quantum field theory in curved spacetimes}, in \emph{Advances in Algebraic Quantum Field Theory}, edited by R.~Brunetti et al.\ (Springer, 2015), pp.~125--189. \href{https://doi.org/10.1007/978-3-319-21353-8_4}{doi:10.1007/978-3-319-21353-8\_4}

\bibitem{RedeiSummers2002}
M.~R\'edei and S.~J.~Summers, \emph{Local primitive causality and the common cause principle in quantum field theory}, Found.~Phys.~\textbf{32}, 335 (2002). \href{https://doi.org/10.1023/A:1014869211488}{doi:10.1023/A:1014869211488}

\bibitem{HoferVecsernyés2013}
G.~Hofer-Szab\'o and P.~Vecsernyes, \emph{Bell inequality and common causal explanation in algebraic quantum field theory}, Stud.~Hist.~Phil.~Mod.~Phys.~\textbf{44}, 404 (2013). \href{https://doi.org/10.1016/j.shpsb.2013.08.001}{doi:10.1016/j.shpsb.2013.08.001}

\end{thebibliography}
\end{document}